\documentclass[12pt]{article}

\textwidth 6.25 in \marginparwidth 0pt \oddsidemargin  0pt
\evensidemargin  0pt \marginparsep 0pt \hoffset .3in
\usepackage{graphicx}
\usepackage{subfigure}
\usepackage{epstopdf}
\usepackage{color}

\usepackage[normalem]{ulem}

\usepackage{amsmath,amsthm,amsopn,amssymb,epsfig,psfrag,mathtools}
\allowdisplaybreaks

\newtheorem{Theorem}{Theorem}[section]

\newtheorem{Remark}{Remark}[section]

\numberwithin{equation}{section} \numberwithin{figure}{section}

%

%

%

%

%
%

%
%

\def\ba{\begin{array}}
\def\ea{\end{array}}

\usepackage{amsthm}

\begin{document}

\title{The final size and critical times of an SIVR epidemic model\thanks{The research
 was partially supported by the National Natural Science Foundation of China (No. 12271470 and 12101301).}}
\date{\empty}
\author{Phyu Phyu Win$^{1,2}$, Zhigui Lin$^1$\thanks{Corresponding author. Email: zglin@yzu.edu.cn (Z. Lin).} and Mengyun Zhang$^3$ \\
{\small $^1$ School of Mathematical Science, Yangzhou University, Yangzhou 225002, China}\\
{\small $^2$ Department of Mathematics, Banmaw University, Banmaw, Myanmar}\\
{\small $^3$ School of Applied Mathematics, Nanjing University Of Finance $\&$ Economics,}\\
{\small  Nanjing 210003, China}
}

\maketitle
\begin{quote}
\noindent
{\bf Abstract.} { 
\small To understand the impact of vaccination, we consider a SIVR  (susceptible-infected-vaccinated-recovered) model which combines impulsive vaccination into the classical SIR model. The final size is firstly defined and estimated, and then the peak value and peak time are considered. Finally, the critical times for a given infected number is studied, and it can be used to define and estimate the stopping time. Our results extend those for the well-understood SIR model.  }

\noindent {\it \bf MSC:} 35K57; 35R12; 92D25
\medskip \\
\noindent {\it \bf Keywords:} Epidemic model; the peak value; the peak time; the critical time
\end{quote}

\section{Introduction}
\smallskip

Infectious diseases threaten the world in various ways, and have a significant impact on human health, economics, social structures, and more \cite{DD}. Controlling infectious diseases has always been a global concern.

Vaccination is an effective strategy for minimizing of infectious diseases and has been considered in many literatures. The SIR (susceptible-infected-recovered) models and its' variants have been extensively explored in \cite{ADO, SLZ}. Gao et al. \cite{GO} investigated SIRS epidemic model with seasonal varying contact rate and mixed vaccination strategy. They studied the permanence and extinction of the disease for SIRS model.  The standard SIR epidemic model was extended to a fourth compartment, $V$, of vaccinated persons in \cite{SK}.   The influence of vaccinations on the total cumulative number  is calculated by comparing with monitored real time Covid-19 data in different countries. The reduction in the final cumulative fraction of infected persons  is given by using the actual pandemic parameters.
In \cite{MTur}, Turkyilmazoglu revealed the vaccine application offers less control over the spread of virus, since some portion of vaccinated people is not totally protected/immuned and viable to infection again after a while due to weak/loss immunity offered by the vaccine. See also \cite{KV, LBZ} and references therein.

As we know, the final size is one of the most concerning indicators for an epidemic.
The final size of an epidemic is the total proportion of the population that becomes infected \cite{BCA, MSW}. Knowing the final size is crucial for assessing the severity of epidemics, evaluating the impact of interventions, guiding healthcare planning, and informing public health decision-making. To understand the development trend and severity of infectious diseases, researchers have estimated and analyzed the final size by using mathematical models and empirical data \cite{CPS, CUI, FB, CG, LZL, JCM}. Results tell us that public health measures \cite{CPS, JCM}, vaccination campaigns \cite{JLS,MTur}, and other interventions \cite{DAAD, JAA} all can reduce the final size by limiting the spread of the disease.

Besides the final size, the peak amplitude of an epidemic and the peak time \cite{Tur} become the main concerns of CDC staff. Understanding the peak value and peak time helps optimize healthcare responses, minimize the burden on healthcare systems, and reduce the overall morbidity and mortality associated with the infectious disease.

The classical SIR model involves the system of ODE \cite{HIP}
\begin{eqnarray}\label{a0}
\left\{
\begin{array}{lll}
\frac{\textrm{d}S(t)}{\textrm{d}t}=-\beta S(t) I(t),\\
\frac{\textrm{d}I(t)}{\textrm{d}t}=\beta S(t) I(t)-\gamma I(t)
\end{array} \right.
\end{eqnarray}
with the initial conditions
$$S(0)=S_{0},I(0)=I_{0},S_{0}+I_{0}=N,$$
where $S(t)$ and $I(t)$ denote the susceptible and infected compartments of a given population in the presence of an infectious disease. The constant $N$ is the size of the population, $R(t):=N-S(t)-I(t)$ is the recovered compartment of the population at time $t$. The positive parameters $\beta$ and $\gamma$ are the infected and recovery rates per unit time, respectively. It is well-known that the basic reproduction number is $\mathcal{R}_{0}=\frac{\beta}{\gamma}N$.

To introduce vaccination into the SIR model, we first make the following assumptions:

1. Population is partitioned into four classes, the susceptibles $S(t)$, infectious $I(t)$, vaccinated $V(t$) and recovered $R(t)$ respectively, see the flow chart in Fig. \ref{tu4}. The total population size is $N$, that is, $S(0)+I(0):=N$.

2. The factor $\sigma (0\leq\sigma\leq 1)$ is the infection probability of the vaccinated member contacting with the infections, $\sigma=0$ means that the vaccine is completely effective in preventing infection and $\sigma=1$ means that the vaccine has no effect.

3. The vaccinations are implemented at times $t=n\tau,n=1,2,3, \cdots$, $\varphi (0<\varphi<1)$ is pulse vaccination rate, and the interpulse time, i.e., the time between two consecutive pulse vaccinations is $\tau$.

4. The vaccine wears off at a constant rate $\theta$.

\begin{figure}[ht]
\centering
\includegraphics[width=0.4\textwidth]{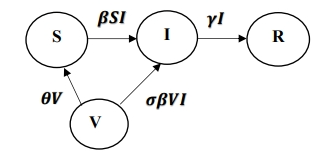}
\caption{Flow chart of SIVR epidemic model.}
\label{tu4}
\end{figure}

When pulse vaccination is incorporated into the SIR model (\refeq {a0}), the system becomes SIVR (susceptible-infected-vaccinated-recovered) epidemic model
\begin{equation}\label{a1}
		\begin{split}
			\begin{cases}\begin{rcases}
		      \frac{\textrm{d}S(t)}{\textrm{d}t}=-\beta S(t) I(t)+\theta V(t),\\
                \frac{\textrm{d}I(t)}{\textrm{d}t}=\beta S(t) I(t)+\sigma\beta V(t) I(t)-\gamma I(t) ,\\
                 \frac{\textrm{d}V(t)}{\textrm{d}t}=-\sigma\beta V(t) I(t)-\theta V(t),\end{rcases}t\neq n\tau,n=1,2,3,\cdots,\\
                 \begin{rcases}
                 S(n\tau^{+})=(1-\varphi)S(n\tau),\\
                 I(n\tau^{+})=I(n\tau),\\
                 V(n\tau^{+})=V(n\tau)+\varphi S(n\tau)\\
                 \end{rcases}t=n\tau,n=1,2,3,\cdots,\\
                 S(0)=S_0>0,\ I(0)=I_0>0,\ V(0)=V_0=0.
                 \end{cases}
		\end{split}
	\end{equation}
We are more interested in the final size, the peak value, the peak time and the critical times for the model \eqref{a1}. The paper is structured as follows. The next section deals with the final size, and its estimation is derived. Section~\ref{sec:Peak value of SIVR model} is devoted to the peak value and peak time,  and four critical times are defined and estimated in Section~\ref{sec:The critical times}. The stopping time is defined in the last section and its estimates for the model \eqref{a0} are presented.

\section{The final size}\label{sec:The final size}
\smallskip

 It is easy to see that the solution $(S, I, V)$ to problem \eqref{a1} exists uniquely and is global. Moreover,
 $S(t)+I(t)+V(t)\leq N$ for $t\geq 0$.

We first claim that
 \begin{equation}
    \lim_{t\to\infty}I(t)=0,
    \label{limits}
 \end{equation}
 which means that the disease vanishes eventually, see Figure \ref{tu3}.

\begin{figure}[ht]
\centering
\includegraphics[width=0.5\textwidth]{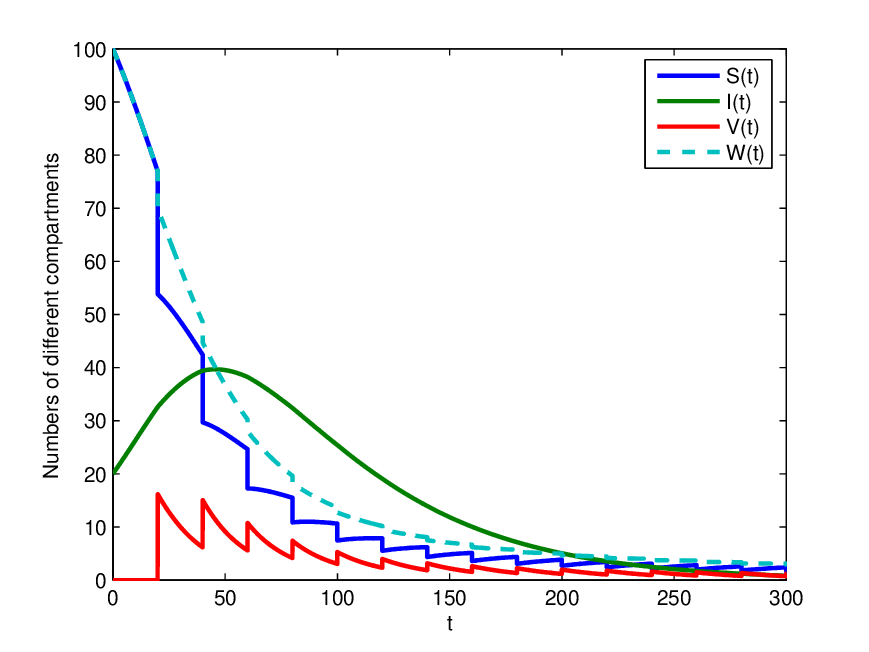}
\caption{\scriptsize The SIVR model. $W(t)=S(t)+V(t)$. The infected number $I$ increases at the beginning, and then decreases in a long run.}
\label{tu3}
\end{figure}

We now prove \eqref{limits} by contradiction. In fact, if
$\limsup_{t\to\infty} I(t)=\varepsilon_{0}$ for some $\varepsilon_{0}>0$, there
exists a sequence $\{t_{n}\}^{+\infty}_{n=1}$ such that $I(t_n)\geq\frac{\varepsilon_{0}}{2}$ for any $n$.

Since $S,I$ are bounded by $N$, we have $|I'(t)|\leq M$ for some $M\geq 0$ for $t\geq 0$ and $t\neq n\tau$.
Therefore, there exists $\delta_{0}>0$ such that
\begin{equation*}
|I(t)-I(t_{n})|\leq\frac{\varepsilon_{0}}{4}\
\text{ for } t\in(t_{n}-\delta_{0}, t_n]\, \textrm{or} \, t\in [t_n, t_{n}+\delta_{0}),
\end{equation*}
which means
\begin{equation*}
I(t)\geq I(t_{n})-|I(t)-I(t_{n})|\geq \frac{\varepsilon_{0}}{2}-\frac{\varepsilon_{0}}{4}=\frac{\varepsilon_{0}}{4}\ \text{ for } t\in(t_{n}-\delta_{0}, t_n]\, \textrm{or}
\, t\in [t_n, t_{n}+\delta_{0})
\end{equation*}
 and
\begin{equation}\label{lll}
\int_{0}^{\infty}I(t)dt\geq\sum_{n=1}^{\infty}\int_{t_{n}-\delta_{0}}^{t_{n}+\delta_{0}}I(t)dt
\geq\sum_{n=1}^{\infty}\frac{\varepsilon_{0}}{4}\delta_{0}=\infty.
  \end{equation}
On the other hand, since
\begin{equation}
    \frac{\textrm{d}}{\textrm{d}t}(S+I+V)(t)=-\gamma I(t),
  \end{equation}
integrating from $t=0$ to $t=+\infty$ yields
\[(S+I+V)(+\infty)-(S+I+V)(0)=-\gamma\int_{0}^{\infty}I(t)dt\]
and
\begin{equation}\label{ee}
     \int_{0}^{\infty}I(t)dt=\frac{-1}{\gamma}[(S+I+V)(\infty)-(S+I+V)(0)]<\infty,
\end{equation}
which leads a contradiction to \eqref{lll}, therefore, $\limsup_{t\to\infty} I(t)={0}$ and $\lim_{t\to\infty} I(t)={0}$.

\bigskip

  Since that $(S+V)'=-\beta S(t) I(t)-\sigma\beta V(t) I(t)\leq 0$ for $t\neq n\tau$, and $(S+V)$ is continuous for $t\geq 0$,  the limit of $(S+V)(t)$ as $t\to \infty$ exists.

As in \cite{MSW} , the final size ($Z$) of the epidemic is defined as the number of members of the population who are infected over the course of the
epidemic. For our model \eqref{a1},
$$Z=N-(S+V)_\infty.$$

Now we estimate $Z$ and define
   $ W(t)=S(t)+V(t)$ for $t\geq 0$.
  It follows from \eqref{a1} that
  \begin{equation} \label{dd}
   -\beta W(t)I(t)\leq W'(t)=-\beta S(t) I(t) -\sigma\beta V(t) I(t) \leq-\sigma\beta W(t)I(t),
            \end{equation}
and dividing \eqref{dd} by $W(t)$ gives
\begin{equation} \label{1dd}
   -\beta I(t)\leq W'(t)/ W(t) \leq-\sigma\beta I(t).
            \end{equation}
Integrating \eqref{1dd} from $0$ to $+\infty$ yields
\begin{equation}\label{ff}
    -\beta\int_{0}^{+\infty}I(t)dt\leq\ln W_{\infty}-\ln W_{0}\leq-\sigma\beta \int_{0}^{+\infty}I(t)dt.
        \end{equation}
By combining \eqref{ee} and \eqref{ff}, we obtain
{\small \[\frac{\beta}{\gamma}[(S+I+V)(\infty)-(S+I+V)(0)]\leq \ln\frac{W_{\infty}}
    {W_{0}}\leq\frac{\sigma\beta}{\gamma} [(S+I+V)(\infty)-(S+I+V)(0)],\]}
which means
    $$\frac{\sigma\beta}{\gamma}[S_{0}+I_{0}-(S+V)(\infty)]\leq\ln\frac{W_{0}}
    {W_{\infty}}\leq\frac{\beta}{\gamma}[S_{0}+I_{0}-(S+V)(\infty)],$$
    and equivalently,
 $$\frac{\sigma\beta}{\gamma}[N-W_{\infty}]\leq\ln\frac{W_{0}}{W_{\infty}}\leq\frac{\beta}{\gamma}[N-W_{\infty}].$$
 In particular, since $\mathcal{R}_{0}=\frac{\beta N}{\gamma}$, we have
    $$\sigma \mathcal{R}_{0}[1-\frac{W_{\infty}}{N}]\leq\ln\frac{W_{0}}{W_{\infty}}\leq \mathcal{R}_{0}[1-\frac{W_{\infty}}{N}].$$
Recalling that $W_0=S_0$, we have estimates for the final size $Z=N-W_\infty$, where $W_\infty$ satisfies
 $$W_*\leq W_\infty\leq W^*,$$
and $W_*$, $W^*$ are, respectively, the unique positive root of
\begin{equation}\label{g1}  g_{1}(x):=\ln\frac{S_{0}}{x}-\mathcal{R}_{0} [1-\frac{x}{N}]=0
 \end{equation}
and
\begin{equation}\label{g2}   g_{2}(x):=\ln\frac{S_{0}}{x}-\sigma \mathcal{R}_{0} [1-\frac{x}{N}]=0.
\end{equation}

\begin{figure}[ht]
\centering
\subfigure[]{ {
\includegraphics[width=0.26\textwidth]{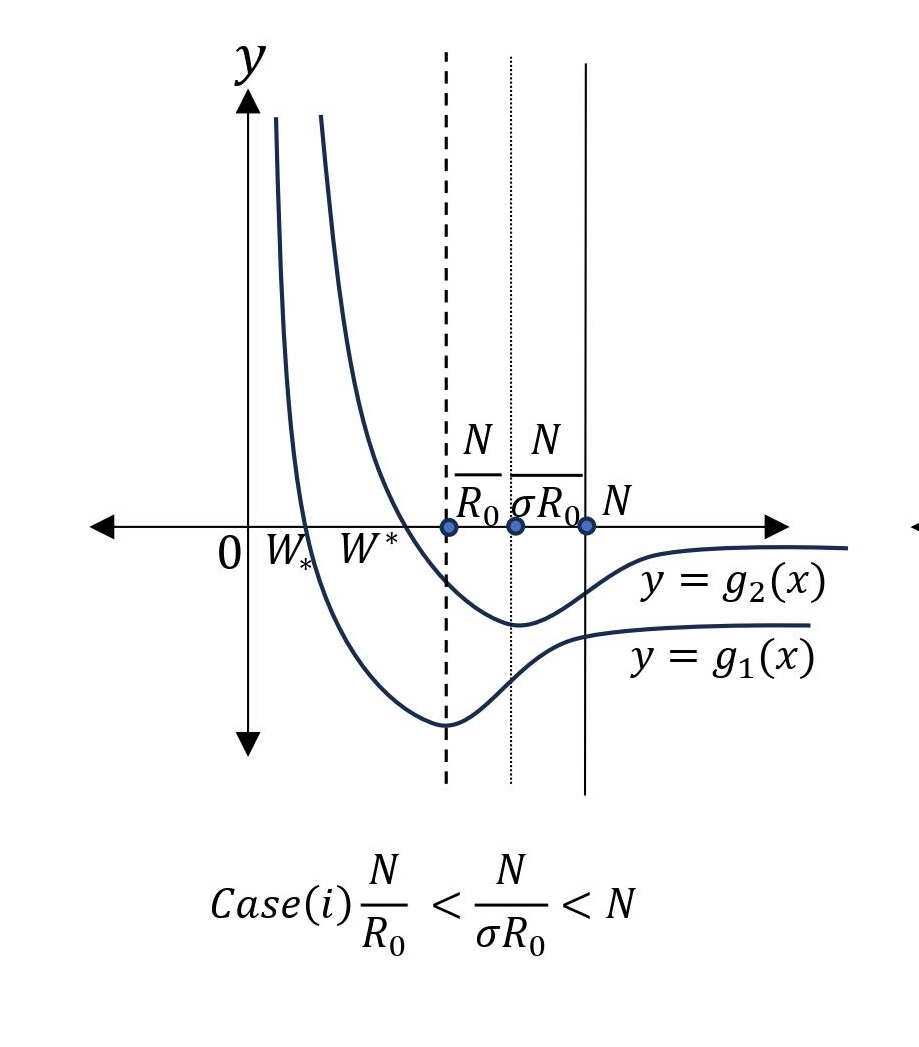}
} }
\subfigure[]{ {
\includegraphics[width=0.26\textwidth]{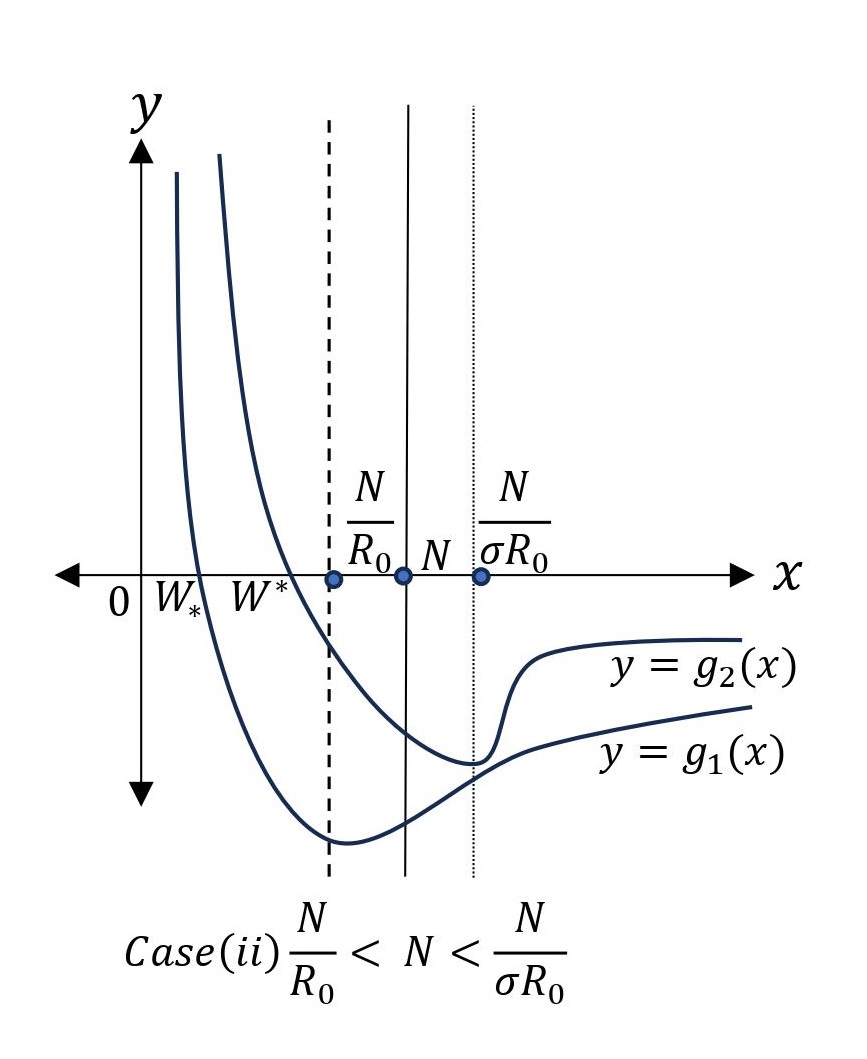}
} }
\subfigure[]{ {
\includegraphics[width=0.26\textwidth]{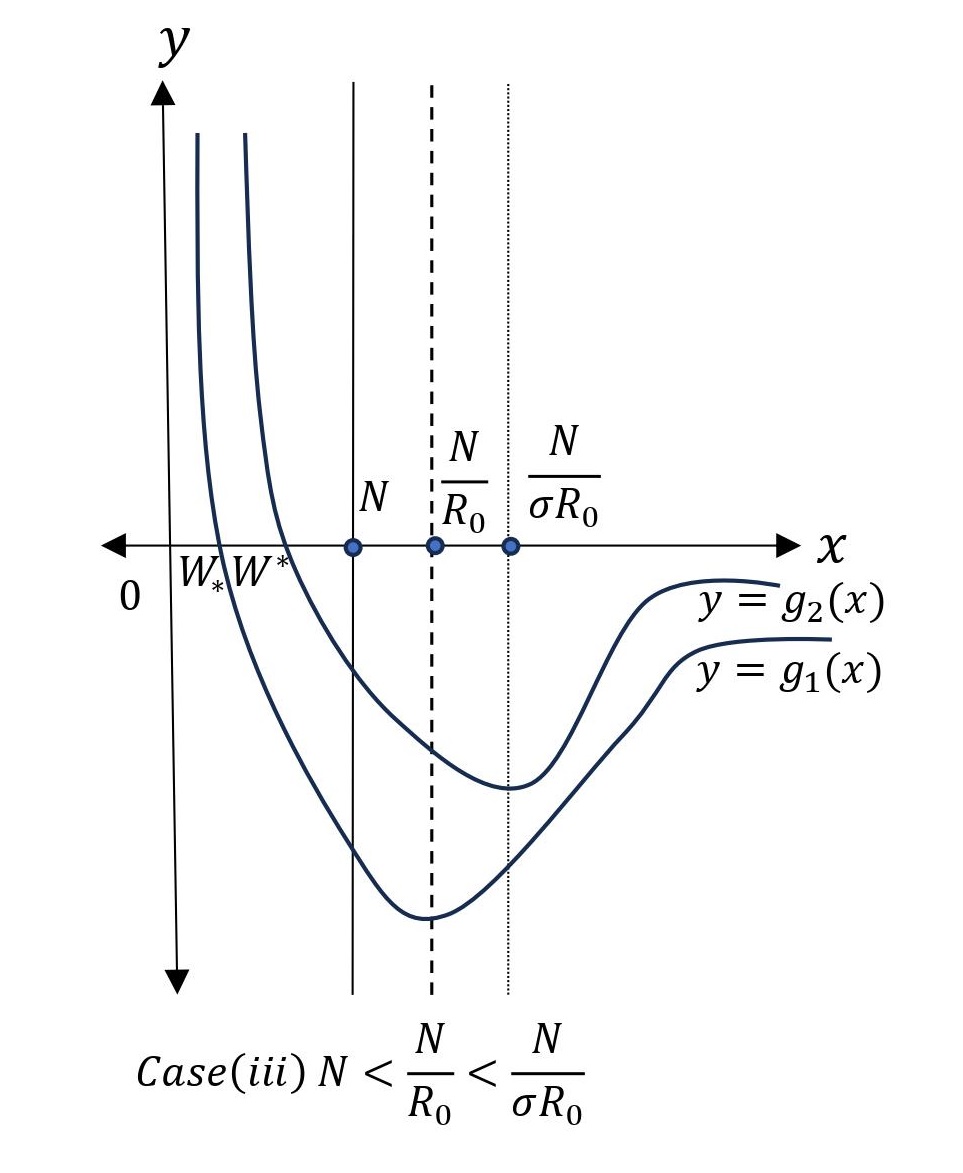}
} }
\caption{\scriptsize The graphs of $g_1$ and $g_2$ for different cases.}\vspace*{-10pt}
\label{tu1}
\end{figure}

In fact, we can see from Fig. \ref{tu1} that
 $$g_{1}(0^{+})>0,\quad g_{1}(x)<0 \text{ for } x\geq N,    $$
there exists a unique $W_{*}$ satisfying $0<W_{*}<N$ and
$$    g_{1}(W_{*})=0,\ g_{1}(x)<0 \text{ for } 0<W_{*}<x,    $$
So we have $$W_{*}\leq W_{\infty} .$$
Similarly,
$$ g_{2}(x)\geq 0 \text{ for } x\leq W_{*},
    $$
there exists a unique $W^{*}$ satisfying $0<W_{*}\leq W^{*}<N$ such that
$$
 g_{2}(W^{*})=0, g_{2}(x)\geq 0 \text{ for } 0<W_{*}\leq x\leq W^{*}\leq N.
    $$
Therefore $$W_{\infty}\leq W^{*}.$$

\begin{Theorem} The final size of the model \eqref{a1} is $Z=N-W_\infty$ and $W_\infty$ satisfies
 $$W_*\leq W_\infty\leq W^*,$$
where $W_*$ is the unique positive root of $g_1=0$ defined in \eqref{g1}, and $ W^*$ is the unique positive root of $g_2=0$ defined in \eqref{g2}.
\label{Theorem 1}
\end{Theorem}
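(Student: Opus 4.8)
The plan is to follow the three-stage route already set up before the statement: first show the epidemic terminates so that $W_\infty:=(S+V)_\infty$ exists; then convert the scalar dynamics of $W=S+V$ into a two-sided integral inequality for $\ln(W_0/W_\infty)$; and finally eliminate the infected person-time $\int_0^{\infty}I\,dt$ via the conservation law and localize $W_\infty$ by an elementary analysis of $g_1,g_2$. For the first stage I would prove \eqref{limits}, i.e. $I(t)\to 0$, by contradiction: if $\limsup_{t\to\infty}I(t)>0$ one picks $t_n\to\infty$ along which $I(t_n)$ is bounded below by a fixed positive number; since $S,I,V\le N$, the right-hand side of the $I$-equation in \eqref{a1} is bounded, so $|I'|$ is uniformly bounded on every inter-pulse interval, and because $I$ is continuous across the pulse times $t=n\tau$ it stays above a fixed positive level on intervals of fixed length $\delta_0$ around the $t_n$, forcing $\int_0^{\infty}I\,dt=\infty$ as in \eqref{lll}. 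This contradicts \eqref{ee}, obtained by integrating $(S+I+V)'=-\gamma I$. The same identity, with $I_\infty=0$ and the definition $W_\infty=(S+V)_\infty$ (which exists because $S+V$ is continuous — the jumps of $S$ and $V$ at $t=n\tau$ cancel — and nonincreasing and bounded below by $0$), pins down $\int_0^{\infty}I\,dt=\frac1\gamma(N-W_\infty)$, and $Z:=N-W_\infty$ is then the final size in the sense of \cite{MSW}.

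For the transcendental bound, from \eqref{a1} one has $W'=-\beta SI-\sigma\beta VI$, and since $0\le\sigma\le1$ and $S,V\ge0$ this gives \eqref{dd}; dividing by $W>0$ and integrating over $[0,\infty)$ gives \eqref{1dd}--\eqref{ff}. Substituting $\int_0^{\infty}I\,dt=\frac1\gamma(N-W_\infty)$ together with $\mathcal{R}_0=\beta N/\gamma$ and $W_0=S_0$ turns \eqref{ff} into
\[
\sigma\mathcal{R}_0\Big(1-\tfrac{W_\infty}{N}\Big)\ \le\ \ln\frac{S_0}{W_\infty}\ \le\ \mathcal{R}_0\Big(1-\tfrac{W_\infty}{N}\Big),
\]
which is precisely $g_2(W_\infty)\ge 0\ge g_1(W_\infty)$ for the functions $g_1,g_2$ of \eqref{g1}--\eqref{g2}.

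It then remains to localize the roots. Each $g_i(x)=\ln(S_0/x)-c_i\mathcal{R}_0(1-x/N)$ with $c_1=1$, $c_2=\sigma$ satisfies $g_i(0^+)=+\infty$ and $g_i(N)=\ln(S_0/N)<0$ (strict since $I_0>0$ forces $S_0<N$), while $g_i'(x)=-1/x+c_i\mathcal{R}_0/N$ vanishes at most once on $(0,N)$, so $g_i$ is strictly decreasing and then (possibly) strictly increasing; combined with $g_i(N)<0$ this yields a \emph{unique} root in $(0,N)$, namely $W_*$ for $g_1$ and $W^*$ for $g_2$, with $g_1>0$ on $(0,W_*)$, $g_1<0$ on $(W_*,N)$, and likewise for $g_2$ at $W^*$. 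Since $1-x/N>0$ and $\sigma\le 1$ on $(0,N)$ we have $g_1\le g_2$ there, so $g_2(W_*)\ge g_1(W_*)=0$ gives $W_*\le W^*$. Finally $g_1(W_\infty)\le 0$ with $g_1>0$ on $(0,W_*)$ forces $W_\infty\ge W_*$, and $g_2(W_\infty)\ge 0$ with $g_2<0$ on $(W^*,N)$ forces $W_\infty\le W^*$, which is the assertion.

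The main obstacle is the termination step: making the ``bump'' estimate around the $t_n$ genuinely rigorous at the impulse instants $t=n\tau$, where $I$ is continuous but $S$ and $V$ are not, and justifying the uniform bound on $|I'|$ over all inter-pulse intervals; the companion subtlety is confirming that the decreasing-then-increasing shape of $g_1,g_2$ on $(0,N)$ really produces exactly one positive root rather than two. Everything in the second and third stages is then routine bookkeeping.
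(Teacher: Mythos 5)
Your proposal is correct and follows essentially the same route as the paper: the contradiction argument for $I(t)\to 0$ via $\int_0^\infty I\,dt<\infty$, the two-sided differential inequality for $W=S+V$ integrated to bound $\ln(W_0/W_\infty)$, elimination of $\int_0^\infty I\,dt$ through the conservation law, and localization of $W_\infty$ between the roots of $g_1$ and $g_2$. Your treatment of the uniqueness of the roots (via the sign of $g_i'$ and $g_i(N)<0$) is in fact more rigorous than the paper's, which only appeals to the plotted graphs of $g_1$ and $g_2$.
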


\begin{Remark} If $\sigma=1$, which means that the vaccine has no effect, we then have $g_1=g_2$ and $W_*=W_\infty=W^*,$
and $W_\infty$ is the unique positive root of $g_1=0$, see  Section 9.2  in \cite{BCA}.
\label{Theo1}
\end{Remark}

\section{The peak value and peak time}\label{sec:Peak value of SIVR model}
\smallskip

Noting that $i(0)>0$ and  $\lim_{t\to\infty}I(t)=0$, there exists $t_P\geq 0$ such that $I_m:=I(t_P)=\sup_{t\geq 0} I(t)$,
we usually call $t_P$ {\bf the peak time} and $I(t_P)$ is {\bf the peak value}.

By using equation \eqref{a1} and $W=S+V$, we have
    \[\frac{d}{dt}[(W+I)(t)-\frac \gamma \beta \ln (W(t))]=-\gamma \frac {(1-\sigma)VI}W\leq 0,\]
     \[\frac{d}{dt}[(W+I)(t)-\frac \gamma {\sigma\beta} \ln (W(t))]=\gamma I(t)\frac {(1-\sigma)SI}{\sigma W}\geq 0\]
for $t\neq n\tau,n=1,2,3,...\geq 0$, which implies that
\begin{equation}\label{204}
\frac \beta \gamma  \frac{d}{dt}[(W+I)(t)] \leq \frac{d}{dt}[\ln (W(t))]\leq \frac {\sigma\beta} \gamma \frac{d}{dt}[(W+I)(t)].
\end{equation}
Recalling that $W, I$ are continuous and piecewise differentiable, integrating \eqref{204} from $0$ to $t$ yields
\begin{equation}\label{224}
    \frac{\sigma\beta}{\gamma}[S_{0}+I_{0}-W(t)-I(t]
    \leq \ln\frac{W_{0}}{W(t)}\leq \frac{\beta}{\gamma}[S_{0}+I_{0}-W(t)-I(t)]
\end{equation}
for any $t>0$ and
 \begin{equation}\label{bb1}
    S_{0}+I_{0}-W(t)-\frac{\gamma}{\sigma\beta}\ln \frac{W_{0}}{W(t)} \leq I(t)
    \leq S_{0}+I_{0}-W(t)-\frac{\gamma}{\beta}\ln \frac{W_{0}}{W(t)}.
    \end{equation}

Usually, the maximum number of infectives is the number of infectives when
the derivative of $I$ is zero, but in our model \eqref{a1}, $I(t)$ is not
differentiable at $t=n\tau,n=1,2,3,...$, so we have to consider two cases.

If $t_{P}\neq n\tau$ for $n=1,2,3,...$, we have $I'(t_P)=0$ and
$(S+\sigma V)(t_{P})=\frac{\gamma}{\beta}$. Using \eqref{bb1} with $t=t_P$ yields
    \begin{equation}\label{bb0}
    S_{0}+I_{0}-\frac{\gamma}{\sigma\beta}-\frac{\gamma}{\sigma\beta}\ln W_{0}+\frac{\gamma}{\sigma\beta}
    \ln\frac{\gamma}{\beta}\leq I_{m}
    \leq S_{0}+I_{0}-\frac{\gamma}{\beta}-\frac{\gamma}{\beta}\ln W_{0}+\frac{\gamma}{\beta}\ln\frac{\gamma}{\sigma\beta}
    \end{equation}
since $\frac{\gamma}{\beta}\leq W(t_{p})\leq \frac{\gamma}{\sigma\beta}$.

If $t_{P}=n_0 \tau$ for $n_0=1,2,3,...$. In this case,
 we have $I'(t^{-}_{P})\geq0$ and $I'(t^{+}_{P})\leq0$, that is,
$$(S+\sigma V)(t_{P})\geq \frac{\gamma}{\beta}\ \textrm{and}\ (1-\varphi+\sigma\varphi)S(t_{P})+\sigma V(t_{P})\leq\frac{\gamma}{\beta},$$
and therefore $\frac{\gamma}{\beta}\leq W(t_{P})\leq \frac{\gamma}{\beta}\frac{1}{\min\{1-\varphi+\sigma\varphi,\sigma\}}$, which together with \eqref{bb1} gives
  \begin{eqnarray}\label{bb2}
\begin{array}{lll}
 S_{0}+I_{0}-\frac{\gamma}{\beta}\frac{1}{\min \{(1-\varphi+\sigma\varphi),\sigma\}}-\frac{\gamma}{\beta}\ln W_{0}+\frac{\gamma}{\sigma\beta}\ln\frac{\gamma}{\sigma\beta}
    \leq I_{m}\\
    \leq S_{0}+I_{0}-\frac{\gamma}{\beta}-\frac{\gamma}{\beta}\ln W_{0}+
    \frac{\gamma}{\beta}\ln[\frac{\gamma}{\beta}\frac{1}{\min\{(1-\varphi+\sigma\varphi),\sigma\}}].
\end{array}
\end{eqnarray}
Combining two cases, we have estimates \eqref{bb2} of the peak value since that \eqref{bb2} holds if \eqref{bb0} holds.

We now turn to estimates of the peak time $t_P$. It follows from \eqref{1dd} that
\begin{eqnarray}\label{b1b3}
-\frac{W'}{\beta WI}\leq 1 \leq -\frac{1}{\sigma}\frac{W'}{\beta WI},
\end{eqnarray}
which together with \eqref{bb1} gives
\begin{eqnarray}
\begin{array}{lll}\label{b1b31}
-\frac{W'}{\beta W[S_{0}+I_{0}-W(t)-\frac{\gamma}{\beta}\ln W(0)
    +\frac{\gamma}{\beta}\ln W(t)]}\leq 1 \leq -\frac{1}{\sigma}\frac{W'}{\beta W[S_{0}+I_{0}-
    W(t)-\frac{\gamma}{\sigma \beta}\ln W(0)+\frac{\gamma}{\sigma \beta}\ln W(t)]},
\end{array}
\end{eqnarray}
and by integrating \eqref{b1b31} from $0$ to $t_{P}$, we have
  \begin{eqnarray*}
\begin{array}{lll}  \int_{W_0}^{W(t_{p})}\frac{dz}{\beta z[z-S_{0}-I_{0}+\frac{\gamma}{\beta}\ln W(0)
    -\frac{\mu}{\beta}\ln z]}
    \leq t_{P}
   \leq \int_{W_0}^{W(t_{P})}\frac{dz}{\sigma\beta z[z-S_{0}-I_{0}+\frac{\gamma}{\sigma\beta}\ln W(0)-\frac{\gamma}{\sigma\beta}\ln z]}.
   \end{array}
\end{eqnarray*}
Recalling that
$\frac{\gamma}{\beta}\leq W(t_{P})\leq\frac{\gamma}{\beta}\frac{1}{\min(1-\varphi+\sigma\varphi,\sigma)}$ and $W(0)=S_0$ yields
\begin{eqnarray}\label{bbb}
\begin{array}{lll}
    \int_{S_0}^{\frac{\gamma}{\beta}}\frac{dz}{\beta z[z-S_{0}-I_{0}+\frac{\gamma}{\beta}\ln S_0
    -\frac{\gamma}{\beta}\ln z]}
    \leq t_{p}
   \leq \int_{S_0}^{\frac{\gamma}{\beta}\frac{1}{\min\{1-\varphi+\sigma\varphi,\sigma\}}}\frac{dz}{\sigma\beta z[z-S_{0}-I_{0}+\frac{\gamma}{\sigma\beta}\ln S_0-\frac{\gamma}{\sigma\beta}\ln z]}.
    \end{array}
\end{eqnarray}

\begin{Theorem} The peak value $I_m$ of the model \eqref{a1} satisfies \eqref{bb2}, and the peak time $t_P$ satisfies \eqref{bbb}.
Moreover, if $\sigma=1$, we have
\begin{eqnarray}
\begin{array}{lll}
 I_{m}= S_{0}+I_{0}-\frac{\gamma}{\beta}-\frac{\gamma}{\beta}\ln W_{0}+
    \frac{\gamma}{\beta}\ln \frac{\gamma}{\beta},\\
 t_{p}=\int_{S_0}^{\frac{\gamma}{\beta}}\frac{dz}{\beta z[z-S_{0}-I_{0}+\frac{\gamma}{\beta}\ln S_0-\frac{\gamma}{\beta}\ln z]}.
\end{array}
\end{eqnarray}
\label{Theorem2}
\end{Theorem}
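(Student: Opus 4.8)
The statement records the estimates assembled in the discussion preceding it, so the plan is to organize that material into three blocks.

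\emph{First}, with $W=S+V$, I would establish the master relation \eqref{bb1}: differentiate the two quantities $(W+I)-\tfrac{\gamma}{\beta}\ln W$ and $(W+I)-\tfrac{\gamma}{\sigma\beta}\ln W$ along \eqref{a1}, using $W'=-\beta SI-\sigma\beta VI$, to obtain the two monotonicity relations displayed just before \eqref{204}; then observe that $W$ and $I$ are continuous across every impulse $t=n\tau$ (only $S$ and $V$ jump, and their sum is unaffected) and are piecewise $C^1$, so integrating \eqref{204} over $[0,t]$ is legitimate and yields \eqref{224}, and solving for $I(t)$ gives \eqref{bb1}, whose two sides are explicit functions of $W(t)$ alone. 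This is the tool driving both remaining blocks.

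\emph{Second}, I would treat the peak value. A maximizer $t_P$ exists because $I$ is continuous, $I(0)>0$, and $I(t)\to0$ by \eqref{limits}. The new feature relative to the classical SIR argument is that $I$ need not be differentiable at the impulse instants, so I would split into two cases. If $t_P\neq n\tau$, then $I'(t_P)=0$ forces $(S+\sigma V)(t_P)=\gamma/\beta$, and from $W=S+V\geq S+\sigma V$ together with $\sigma W\leq S+\sigma V$ one gets $\gamma/\beta\leq W(t_P)\leq\gamma/(\sigma\beta)$; inserting the two endpoints into \eqref{bb1} gives \eqref{bb0}. If $t_P=n_0\tau$, I would use the one-sided conditions $I'(t_P^-)\geq0$ and $I'(t_P^+)\leq0$ instead; the former gives $(S+\sigma V)(t_P)\geq\gamma/\beta$, and, after substituting $S(t_P^+)=(1-\varphi)S(t_P)$ and $V(t_P^+)=V(t_P)+\varphi S(t_P)$, the latter gives $(1-\varphi+\sigma\varphi)S(t_P)+\sigma V(t_P)\leq\gamma/\beta$; these yield $\gamma/\beta\leq W(t_P)\leq\frac{\gamma}{\beta}\,\frac{1}{\min\{1-\varphi+\sigma\varphi,\sigma\}}$, and \eqref{bb1} then produces \eqref{bb2}. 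Since $\min\{1-\varphi+\sigma\varphi,\sigma\}\leq\sigma$, the bracket in \eqref{bb2} is at least as wide as in \eqref{bb0}, so \eqref{bb2} holds in either case; that is the asserted bound on $I_m$.

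\emph{Third}, for the peak time I would note that $I_0>0$ and uniqueness force $I(t)>0$ for every finite $t$, so $W$ is strictly decreasing, and \eqref{1dd} rearranges to \eqref{b1b3}. Substituting the explicit $W$-dependent bounds of \eqref{bb1} into \eqref{b1b3} gives \eqref{b1b31}; then the plan is to set $z=W(t)$ and integrate over $[0,t_P]$ (using $W(0)=S_0$), producing the integral inequalities for $t_P$ with $W(t_P)$ as one limit, and finally to insert $\gamma/\beta\leq W(t_P)\leq\frac{\gamma}{\beta}\,\frac{1}{\min\{1-\varphi+\sigma\varphi,\sigma\}}$ to reach \eqref{bbb}. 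For the $\sigma=1$ claim: then the two relations in the first block become genuine conservation laws, so \eqref{bb1} is the equality $I(t)=S_0+I_0-W(t)-\frac{\gamma}{\beta}\ln\frac{W_0}{W(t)}$; also $\min\{1-\varphi+\sigma\varphi,\sigma\}=1$, so $W(t_P)=\gamma/\beta$ exactly; substituting $W(t_P)=\gamma/\beta$ and $W_0=S_0$ into these now-exact relations gives the two closed forms for $I_m$ and $t_p$, and the two integral bounds in \eqref{bbb} coincide.

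The hard part will be the impulse case in the second block: one must genuinely allow the peak to occur at a vaccination instant, derive the correct one-sided derivative inequalities there, and convert them into the two-sided bound on $W(t_P)$ carrying the factor $1/\min\{1-\varphi+\sigma\varphi,\sigma\}$. A secondary point requiring care is the change of variables $z=W(t)$ in the third block: one should check that the denominators $z-S_0-I_0+\tfrac{\gamma}{\beta}\ln(S_0/z)$ and its $\sigma\beta$-analogue keep a fixed sign on the relevant range of $z$, and that the limits of integration are ordered consistently with the range $W(t_P)\in[\gamma/\beta,\ \tfrac{\gamma}{\beta}/\min\{1-\varphi+\sigma\varphi,\sigma\}]$, so that the inequalities survive the substitution.
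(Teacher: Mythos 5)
Your proposal is correct and follows essentially the same route as the paper: it derives the two-sided relation \eqref{bb1} from the monotone quantities $(W+I)-\tfrac{\gamma}{\beta}\ln W$ and $(W+I)-\tfrac{\gamma}{\sigma\beta}\ln W$, splits the peak analysis into the cases $t_P\neq n\tau$ and $t_P=n_0\tau$ to bound $W(t_P)$, and obtains \eqref{bbb} by the substitution $z=W(t)$ in \eqref{b1b31}. Your added remarks on the sign of the denominators and the ordering of the integration limits are points the paper passes over silently, but they do not change the argument.
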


As an example,  we illustrate estimates of the peak time and peak value of SIVR model in Fig. \ref{tu2}. Assume that there is no initially vaccination, the infection probability $\sigma$ of the vaccinated member contacting with the infections is small. Let $\beta=0.0005, \theta=0.03, \sigma=0.002, \gamma=0.01, \varphi=0.3 $, and take $S_{0}=100, I_{0}=20$.  Fig. \ref{tu2} shows that the peak time is $t_{P}=60$ and the peak value is $I_{m}=53$.

\begin{figure}[ht]
\centering
\includegraphics[width=0.5\textwidth]{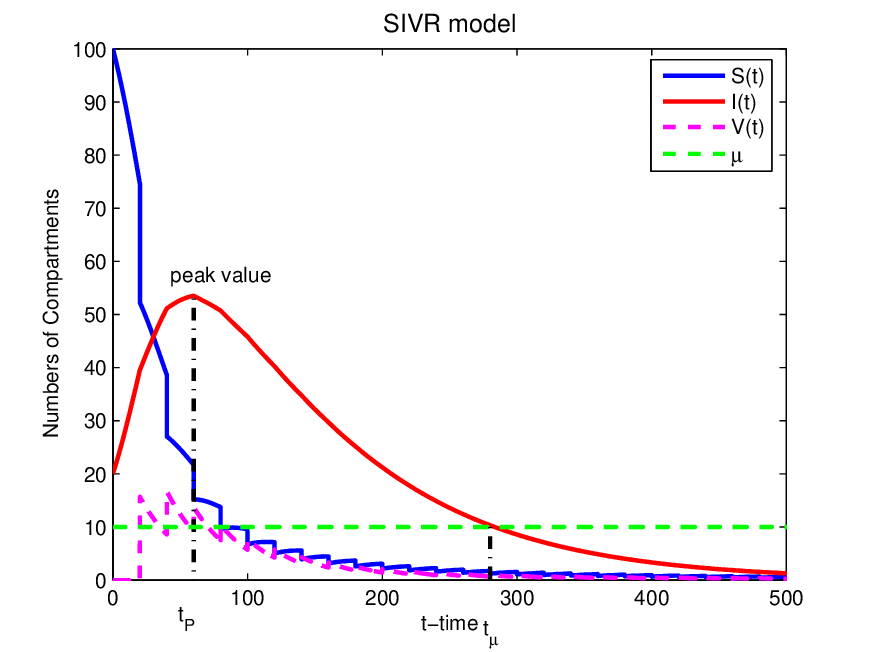}
\caption{\scriptsize The SIVR model and $W(t)=S(t)+V(t)$. The peak value $I_m$ is 53. $t_P(=60)$ is the peak time and $t_\mu(280)$ is the critical time for the infected desired level $\mu=10$. }
\label{tu2}
\end{figure}

\section{The critical times}\label{sec:The critical times}

As in \cite{HIP}, there are two important critical times for the epidemic model. The one is the first time ($t_\mu$) the infected population is below a given threshold ($\mu$), and the other is the first time the number of infected individuals begins to decrease, which is in fact the peak time $t_P$ for most models with single wave and can be viewed as the critical time with $\mu$ replaced by the peak value $I_P$.

However, owing to the complexity of the development of infectious diseases, multiple waves of infections are possible. For example, in several countries successive waves of the COVID-19 disease have been observed \cite{MKJ, RCR}. So we now present the following four critical times.

Let $(S, I, V)$ be the solution of \eqref{a1} with $S(0)=x\geq 0$ and $I(0)=y\geq 0$. For a given threshold $\mu>0$, we denote four critical times as followings, see Fig. \ref{tu5}.

$(i)$ $u_*$ is the first time at which the number of infected individuals is not great than the given value, that is,
$$u_*(x,y):=\underline t_\mu =\inf \{t>0:\ I(t)\leq \mu\};$$

$(ii)$ $u^*$ is the last time at which the number of infected individuals is not less than the given value, that is,
$$u^*(x,y):=\overline t_\mu =\sup \{t>0:\ I(t)\geq \mu\};$$

$(iii)$ $v_*$ is the first time from which the number of infected individuals begins to decreases, that is,
$$v_*(x,y):=\inf \{t>0:\ \exists \delta>0 \, \textrm{such that}\, I(z)< I(t) \,\textrm{for}\, z\in (t, t+\delta)\};$$

$(iv)$ $v^*$ is the last time at which the number of infected individuals is not decreasing, that is,
$$v^*(x,y):=\sup \{t>0:\ \exists \delta>0 \, \textrm{such that}\, I(z)\leq I(t) \,\textrm{for}\, z\in (t-\delta, t)\};$$

\begin{figure}[ht]
\centering
\includegraphics[width=0.5\textwidth]{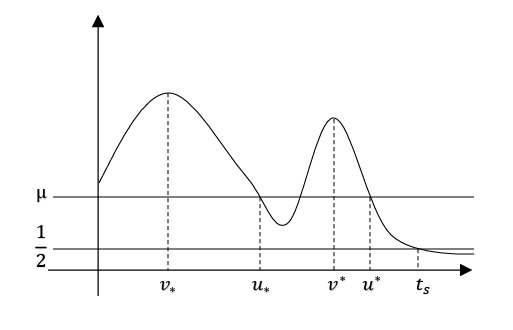}
\caption{\scriptsize The four critical times and the stopping time.}
\label{tu5}
\end{figure}

It is easy to see that $u_*(x,y)=0$ when $0\leq y<\mu$, and $v_*=v^*$ for epidemic models with single wave.
For our model \eqref{a1}, let $\mu=10$, then the first time $\underline t_\mu$ at which the number of infected individuals is not great than $\mu$ is $280$, see Fig. \ref{tu2}, and the last time $\overline t_\mu$ at which the number of infected individuals is not less than $\mu$ is also $280$.

Next we derive some estimates for the critical times.
\begin{Theorem}
For each $x\geq0$ and $y\geq\mu$,
\begin{equation} \label{eke}
\frac{\ln x-\ln \frac{\gamma}{\beta \sigma}}{\beta I_m}\leq u_{*}(x,y)\leq\ \frac{x+y}{\gamma\mu}.\end{equation}
\end{Theorem}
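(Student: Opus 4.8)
The plan is to establish the two inequalities separately, using throughout that the impulsive jumps in $S$ and $V$ at $t=n\tau$ exactly cancel, so that $S+I+V$ and $W:=S+V$ are continuous on $[0,\infty)$ (and $I$ itself never jumps); this is what makes the integrations below legitimate. I would also note at the outset that $u_*:=u_*(x,y)$ is finite, since $I(t)\to 0<\mu$.

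For the upper bound I would argue as follows. By definition of the infimum, $I(t)>\mu$ on $(0,u_*)$, hence $I\ge\mu$ on $[0,u_*]$. From $\frac{d}{dt}(S+I+V)=-\gamma I$ and the continuity of $S+I+V$, integrating from $0$ to $u_*$ gives
\[(x+y)-(S+I+V)(u_*)=\gamma\int_0^{u_*}I(t)\,dt\ \ge\ \gamma\mu\,u_*,\]
and since $(S+I+V)(u_*)\ge0$ this yields $u_*\le(x+y)/(\gamma\mu)$.

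For the lower bound, the idea is to quantify how long $I$ must keep increasing. Since $I'(t)=I(t)\bigl(\beta(S+\sigma V)(t)-\gamma\bigr)$ with $(S+\sigma V)\ge\sigma(S+V)=\sigma W$, the function $I$ is strictly increasing at every non-pulse time at which $W(t)>\gamma/(\beta\sigma)$, hence (by continuity of $I$) strictly increasing on any interval on which that holds. On the other hand \eqref{1dd} together with $I\le I_m$ gives $(\ln W)'\ge-\beta I\ge-\beta I_m$, so integrating from $0$ and using $W(0)=x$ (since $V(0)=0$) yields $W(t)\ge x\,e^{-\beta I_m t}$. Therefore $W(t)>\gamma/(\beta\sigma)$ whenever $x e^{-\beta I_m t}>\gamma/(\beta\sigma)$, i.e. whenever $t<T_0:=\bigl(\ln x-\ln(\gamma/\beta\sigma)\bigr)/(\beta I_m)$. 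On $[0,T_0)$ we then get $I(t)>I(0)=y\ge\mu$, so $(0,T_0)$ contains no point of $\{t>0:I(t)\le\mu\}$, whence $u_*(x,y)\ge T_0$ (the claim being vacuous when $T_0\le0$).

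I expect the routine calculus to cause no trouble; the points needing care are the continuity of $S+I+V$ and $\ln W$ across the impulse times and a few degenerate cases ($x=0$, $y=\mu$, or $T_0\le0$) in which a bound becomes trivial. The one genuinely load-bearing observation—and what I would regard as the crux—is pairing the elementary bound $S+\sigma V\ge\sigma W$ with the exponential lower estimate $W(t)\ge xe^{-\beta I_m t}$ coming from \eqref{1dd}: this is what converts ``$I$ is still on the way up'' into an explicit lower bound for the first time $I$ falls to the level $\mu$.
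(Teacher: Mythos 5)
Your proof is correct and follows essentially the same route as the paper: the upper bound via integrating $(S+I+V)'=-\gamma I$ with $I\ge\mu$ on $[0,u_*]$, and the lower bound via the pair of facts $S+\sigma V\ge\sigma W$ and $W(t)\ge xe^{-\beta I_m t}$ from \eqref{1dd}. The only (harmless) difference is that you run the lower-bound argument forward in time (``$W>\gamma/(\beta\sigma)$ keeps $I$ increasing, so $I$ cannot yet have reached $\mu$''), whereas the paper takes the contrapositive at $t=u_*$, deducing $W(u_*)\le\gamma/(\sigma\beta)$ from $I'(u_*^-)\le 0$ and comparing with the exponential bound there.
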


\begin{proof}
Let $S, I$ and $V$ be the solution of the SIVR model \eqref{a1} with $S(0)=x$ and $I(0)=y$.
Since $\frac{\textrm{d}}{\textrm{d}t}(S+I+V)(t)=-\gamma I(t),$
$$\int_{0}^{t}\gamma I(\tau)\textrm{d}\tau+S(t)+I(t)+V(t)=x+y.$$
Choosing $t=u_{*}(x,y)$ and noting that $I(\tau)\geq\mu$ for $0\leq \tau \leq u_{*}(x,y)$ gives
\begin{equation}\label{cc1}
u_{*}(x,y)\gamma\mu\leq\int_{0}^{u_{*}(x,y)} \gamma I(\tau)\textrm{d}\tau\leq x+y,
\end{equation}
and we have the right inequality of  \eqref{eke}.

On the other hand, since that $I(t)>\mu=I(u_{*})$ for $0\leq t<u_{*}$ by the definition of $u_{*}$, we have $I'(u^{-}_{*})\leq 0$, which together with the second equation of \eqref{a1} yields
$$ (S+\sigma V)(u_{*})\leq \frac{\gamma}{\beta}.$$
and therefore,
\begin{equation} \label{ef1} (S+V)(u_{*}):=W(u_*) \leq \frac{\gamma}{\sigma \beta}.\end{equation}
It follows from  \eqref{1dd} that
\begin{equation} \label{ef2}
   W(u_*) \geq  W(0)e^{-\beta\int_{0}^{u_*} I(\tau){\textrm{d}\tau}}\geq  xe^{-\beta u_* I_m},    \end{equation}
which together with \eqref{ef1} give the left inequality of \eqref{eke}.
\end{proof}

\begin{Theorem}
 For each $x\geq \frac{\gamma}{\beta}$ and $y>0$,
\begin{equation}\label{de1}
\frac{\ln x-\ln(\frac{\gamma}{\beta}\frac{1}{\min \{1-\varphi+\sigma\varphi,\sigma\}})}{\beta I_m}\leq v_{*}(x,y)\leq \frac{\ln x - \ln (\frac{\gamma}{\beta})}{\sigma\beta y}.\end{equation}
\end{Theorem}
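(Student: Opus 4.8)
The plan is to mirror the proofs of Theorem~\ref{Theorem2} and of the previous theorem, handling the two inequalities in \eqref{de1} separately. Throughout let $(S,I,V)$ be the solution of \eqref{a1} with $S(0)=x$, $I(0)=y$ (so $V(0)=0$), and put $W=S+V$. Then $W(0)=x$, and $W$ is continuous on $[0,\infty)$ even across the impulse times, since $S(n\tau^+)+V(n\tau^+)=(1-\varphi)S(n\tau)+V(n\tau)+\varphi S(n\tau)=S(n\tau)+V(n\tau)$. The first thing I would record is a monotonicity fact: by the very definition of $v_*$, the function $I$ does not begin to decrease at any $t_0\in(0,v_*)$, so at every non-impulse $t_0\in(0,v_*)$ one must have $I'(t_0)\ge 0$ (otherwise $I$ would be strictly decreasing near $t_0$), equivalently $(S+\sigma V)(t_0)\ge\gamma/\beta$; together with continuity of $I$ this makes $I$ non-decreasing on $[0,v_*)$, hence $I(t)\ge y$ there, and also $W(t)\ge(S+\sigma V)(t)\ge\gamma/\beta$ for $t\in(0,v_*)$, so $W(v_*)\ge\gamma/\beta$ by continuity.

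For the \emph{lower} bound I would next pin down $W(v_*)$ from above by a one-sided derivative analysis at $v_*$, exactly as in the peak-time computation. One has $I'(v_*^-)\ge0$ (monotonicity) and $I'(v_*^+)\le0$ (decrease just after $v_*$). If $v_*\ne n\tau$ this gives $(S+\sigma V)(v_*)=\gamma/\beta$, whence $W(v_*)=S(v_*)(1-\tfrac1\sigma)+\tfrac{\gamma}{\sigma\beta}\le\tfrac{\gamma}{\sigma\beta}$ because $\sigma\le1$. If $v_*=n_0\tau$, then $I'(v_*^-)\ge0$ gives $(S+\sigma V)(n_0\tau)\ge\gamma/\beta$ while $I'(v_*^+)\le0$, read off from the post-impulse values, gives $(1-\varphi+\sigma\varphi)S(n_0\tau)+\sigma V(n_0\tau)\le\gamma/\beta$; since $0<1-\varphi+\sigma\varphi\le1$ and $0<\sigma\le1$, this yields $\min\{1-\varphi+\sigma\varphi,\sigma\}\,W(n_0\tau)\le\gamma/\beta$. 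In both cases $W(v_*)\le\frac{\gamma}{\beta}\frac{1}{\min\{1-\varphi+\sigma\varphi,\sigma\}}$. On the other hand, the left inequality in \eqref{1dd} gives $W'/W\ge-\beta I\ge-\beta I_m$, so $W(v_*)\ge W(0)e^{-\beta I_m v_*}=xe^{-\beta I_m v_*}$; combining the two estimates for $W(v_*)$ and solving for $v_*$ produces the left inequality of \eqref{de1}.

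For the \emph{upper} bound I would use $I(t)\ge y$ on $[0,v_*)$ in the right inequality of \eqref{1dd}: $W'(t)/W(t)\le-\sigma\beta I(t)\le-\sigma\beta y$ for $t\in[0,v_*)$. Integrating from $0$ to $v_*$ (legitimate since $W$ is continuous, piecewise $C^1$ and positive) gives $\ln W(v_*)\le\ln x-\sigma\beta y\,v_*$. Since $W(v_*)\ge\gamma/\beta$ from the first paragraph, $\ln(\gamma/\beta)\le\ln x-\sigma\beta y\,v_*$, which rearranges to $v_*\le\frac{\ln x-\ln(\gamma/\beta)}{\sigma\beta y}$, the right inequality of \eqref{de1}. (The hypothesis $x\ge\gamma/\beta$ is what makes $I'(0)=y(\beta x-\gamma)\ge0$, so $v_*>0$ and the bounds are consistent.)

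I expect the main obstacle to be the bookkeeping at $t=v_*$: one has to translate the one-sided conditions $I'(v_*^-)\ge0$ and $I'(v_*^+)\le0$ correctly through the jump $S(n\tau^+)=(1-\varphi)S(n\tau)$, $V(n\tau^+)=V(n\tau)+\varphi S(n\tau)$ to get the sharp constant $\frac{1}{\min\{1-\varphi+\sigma\varphi,\sigma\}}$, and — perhaps the subtler point — to justify purely from the definition of $v_*$ that $I$ is non-decreasing on $[0,v_*)$ (giving both $I\ge y$ and $S+\sigma V\ge\gamma/\beta$ there). Once those two facts are established, the rest is the same integration of \eqref{1dd} already used for the peak time and for $u_*$, and no genuinely new estimate is needed.
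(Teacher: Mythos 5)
Your proposal is correct and follows essentially the same route as the paper: integrate \eqref{1dd} over $[0,v_*]$, bound $W(v_*)$ from below by $\gamma/\beta$ and from above by $\frac{\gamma}{\beta}\frac{1}{\min\{1-\varphi+\sigma\varphi,\sigma\}}$ via the one-sided derivative conditions at $v_*$ (splitting into the cases $v_*\neq n\tau$ and $v_*=n_0\tau$), and use $y\leq I\leq I_m$ on $[0,v_*]$. The only difference is that you explicitly justify from the definition of $v_*$ that $I$ is non-decreasing on $[0,v_*)$, a fact the paper asserts without proof; this is a welcome bit of added rigor but not a different method.
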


\begin{proof}
Let $(S, I, V)$ be the solution of the SIVR model \eqref{a1} with $S(0)=x$ and $I(0)=y$. Letting $W=S+V$ and integrating equation \eqref{1dd} from $0$ to $t$ gives
\begin{equation} \label{ee1}
   W(0)e^{-\beta\int_{0}^{t} I(\tau){\textrm{d}\tau}}\leq  W(t) \leq W(0)e^{-\sigma\beta\int_{0}^{t} I(\tau){\textrm{d}\tau}}
            \end{equation}
for $t\geq0$.

Since $I(t)$ is continuous and increasing on $t\in [0, v_{*}(x,y)]$, $y\leq I(t)\leq I_m$ for $t\in [0, v_{*}(x,y)]$.
If $v_{*}\neq n\tau$ for $n=1,2,3,\cdots$, we have $I'(v_*)=0$ and
$(S+\sigma V)(v_{*})=\frac{\gamma}{\beta}$. In view of \eqref{ee1},
$$\frac{\gamma}{\beta}=(S+\sigma V)(v_{*}(x,y))\leq W(v_{*}(x,y))\leq xe^{-\sigma\beta\int_{0}^{v_{*}} I(\tau){\textrm{d}\tau}}\leq xe^{-\sigma\beta yv_{*}},$$
$$\frac{\gamma}{\sigma\beta}=(S/\sigma+V)(v_{*}(x,y))\geq W(v_{*}(x,y))\geq xe^{-\beta\int_{0}^{v_{*}} I(\tau){\textrm{d}\tau}}\geq xe^{-\beta I_m v_{*}}.$$
Taking the natural logarithm and rearranging gives \eqref{de1}.

If $v_{*}=n_0 \tau$ for $n_0=1,2,3,\cdots$,
we have $I'(v^{-}_{*})\geq0$ and $I'(v^{+}_{*})\leq0$, that is,
$$\frac{\gamma}{\beta}\leq(S+\sigma V)(v_{*})\leq W(v_{*}(x,y))\ \textrm{and}\ \frac{\gamma}{\beta}\frac{1}{\min(1-\varphi+\sigma\varphi,\sigma)}\geq W(v_{*}(x,y)).$$
By \eqref{ee1},
$$\frac{\gamma}{\beta}\leq W(v_{*}(x,y))\leq xe^{-\sigma\beta\int_{0}^{v_{*}(x,y)} I(\tau){\textrm{d}\tau}}\leq xe^{-\sigma\beta yv_{*}(x,y)}$$
\textrm{and}\
$$\frac{\gamma}{\beta}\frac{1}{\min(1-\varphi+\sigma\varphi,\sigma)}\geq W(v_{*}(x,y))\geq xe^{-\beta\int_{0}^{v_{*}(x,y)} I(\tau){\textrm{d}\tau}}\geq xe^{-\beta v_{*}(x,y) I_m} .$$
Now we can take the natural logrithm of above two equations to get \eqref{de1} and complete the proof.
\end{proof}

\section{The stopping time}

Usually, we say the epidemic stops if there is no longer infected individual after a special time $t_S$, that is, $I(t_S-1)>0$ and
$I(t)=0$ for $t\geq t_S$.
It is well-known that in any actual epidemic situations, the infected number is a nonnegative integer. However,
in most epidemic compartment models described by ODEs or PDEs, $I(t)$ is a continuous function of time $t$.
We have the limit $\lim_{t\to\infty}I(t)=0$ as proved above,
but $I(t)>0$ for $t>0$, it is difficult to derive the exact time $t_S$.
 As we know, there is no definition for the stopping time.

The critical time for a given threshold $\mu(>0)$ provides us a convenient route.
Denote
$$t_S:=\overline t_\mu |_{\mu=0.5}=\sup \{t>0:\ I(t)\geq 0.5\},$$
where $(S, I, V)$ is the solution of \eqref{a1}.
It is easy to see that $I(t_S-1)\geq 1$ and
$I(t)=0$ for $t\geq t_S$ if $I(t)$ is a nonnegative integer for any $t\geq 0$.

With the above definition, we have the following estimates of the stopping time for the $SIR$ model \eqref{a0}  by using the results in \cite{HIP}.

\begin{Theorem} Let $(S,I)$ be the solution of \eqref{a0} with $S(0)=x>0, I(0)=y>0.5$. Then we have\\
$(i)$ $\frac 1\gamma \ln (\frac{x+y}{\gamma/\beta+0.5})\leq t_S\leq \frac{x+y}{0.5\gamma};$\\
$(ii)$ $t_S\leq \frac{\ln(2y)}{\gamma-\beta x }$ if $ x\in [0, \gamma/\beta);$\\
$(iii)$ $\lim\limits_{x+y\to \infty} \frac{t_S}{\ln(2(x+y))/\gamma}=1.$
\end{Theorem}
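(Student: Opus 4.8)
The plan is to exploit the elementary structure of the SIR flow \eqref{a0}. Since $I>0$ throughout, $S$ is strictly decreasing, so $\beta S-\gamma$ decreases strictly and $I'=(\beta S-\gamma)I$ changes sign at most once: $I$ is unimodal. Because $I(0)=y>0.5$, $I(t)\to0$ as $t\to\infty$ (a classical fact for \eqref{a0}), and $I$ is continuous, $t_S$ is well defined and finite, $I(t_S)=0.5$, $I>0.5$ on $[0,t_S)$, and hence $I'(t_S)\le0$, which forces $(\beta S(t_S)-\gamma)\cdot\tfrac12\le0$, i.e. $S(t_S)\le\gamma/\beta$. I shall use repeatedly the two identities read off from \eqref{a0}: integrating $(S+I)'=-\gamma I$ gives $\gamma\!\int_0^t I(\tau)\,\textrm{d}\tau=x+y-S(t)-I(t)$, and $(\ln I)'=\beta S-\gamma$.

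For $(i)$ I would first note that, since $I(\tau)\ge0.5$ on $[0,t_S]$, the first identity at $t=t_S$ yields $0.5\,\gamma\,t_S\le\gamma\!\int_0^{t_S}I\le x+y$, which is the right inequality. For the left one the plan is to use $(S+I)'=-\gamma I\ge-\gamma(S+I)$, so that $(S+I)(t)\ge(x+y)e^{-\gamma t}$; evaluated at $t=t_S$ this reads $(x+y)e^{-\gamma t_S}\le S(t_S)+0.5\le\gamma/\beta+0.5$, and taking logarithms gives $t_S\ge\frac1\gamma\ln\frac{x+y}{\gamma/\beta+0.5}$. For $(ii)$ the observation is that $x<\gamma/\beta$ implies $S(t)\le S(0)=x<\gamma/\beta$ for all $t$, whence $(\ln I)'=\beta S-\gamma\le-(\gamma-\beta x)<0$ and $I(t)\le y\,e^{-(\gamma-\beta x)t}$; putting $t=t_S$ gives $0.5\le y\,e^{-(\gamma-\beta x)t_S}$, i.e. $t_S\le\ln(2y)/(\gamma-\beta x)$.

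Part $(iii)$ splits into two halves. The bound $\liminf_{x+y\to\infty}\gamma t_S/\ln(2(x+y))\ge1$ is immediate from the left half of $(i)$, since $\gamma t_S\ge\ln(x+y)-\ln(\gamma/\beta+0.5)$. For the reverse inequality I would fix $\varepsilon\in(0,1)$ and let $t_\varepsilon$ be the first time with $S(t)\le\varepsilon\gamma/\beta$ (taking $t_\varepsilon=0$ when $x\le\varepsilon\gamma/\beta$). For $t\ge t_\varepsilon$ one has $(\ln I)'\le-(1-\varepsilon)\gamma$, hence $I(t)\le I(t_\varepsilon)\,e^{-(1-\varepsilon)\gamma(t-t_\varepsilon)}\le(x+y)\,e^{-(1-\varepsilon)\gamma(t-t_\varepsilon)}$, using $I(t_\varepsilon)\le\sup_{t\ge0}I(t)\le(S+I)(0)=x+y$. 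Taking $t=t_S$ (trivial when $t_S\le t_\varepsilon$) gives
\[
t_S\le t_\varepsilon+\frac{\ln(2(x+y))}{(1-\varepsilon)\gamma}.
\]
It then remains to show $t_\varepsilon\to0$ as $x+y\to\infty$. I would argue in the phase plane: integrating $\frac{\textrm{d}I}{\textrm{d}S}=-1+\frac{\gamma}{\beta S}$ along the orbit shows $I=\mathcal I(S):=(x+y)-S-\frac{\gamma}{\beta}\ln\frac{x}{S}$ when the susceptible coordinate equals $S$, so $t_\varepsilon=\int_{\varepsilon\gamma/\beta}^{x}\frac{\textrm{d}S}{\beta S\,\mathcal I(S)}$. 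Since $\mathcal I$ is concave in $S$ with maximum at $S=\gamma/\beta$, on $[\varepsilon\gamma/\beta,x]$ it is bounded below by $\min\{y,\ (x+y)-x-\frac{\gamma}{\beta}\ln\frac{\beta x}{\varepsilon\gamma}\}$, which tends to $+\infty$ whenever $x$ stays bounded, forcing $t_\varepsilon\to0$ in that case; and along a sequence with $x\to\infty$, splitting the integral at $S=x/2$ and using $\mathcal I(S)\ge x/4$ on $[\varepsilon\gamma/\beta,x/2]$ and $\mathcal I(S)\ge y+\tfrac12(x-S)$ on $[x/2,x]$ (both valid for $x$ large) gives $t_\varepsilon=O\!\big(\tfrac{\ln x}{x}\big)\to0$. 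Consequently $\frac{\gamma t_S}{\ln(2(x+y))}\le\frac{\gamma t_\varepsilon}{\ln(2(x+y))}+\frac1{1-\varepsilon}\to\frac1{1-\varepsilon}$, and letting $\varepsilon\downarrow0$ finishes $(iii)$.

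Parts $(i)$, $(ii)$ and the lower bound in $(iii)$ are one-line consequences of the two conservation identities. The genuine difficulty is the upper bound in $(iii)$: one must show that the burn-in time $t_\varepsilon$ needed for $S$ to drop below $\varepsilon\gamma/\beta$ is negligible next to $\ln(x+y)$. The naive bound $\mathcal I\ge0.5$ only yields $t_\varepsilon=O(\ln x)$, which carries the wrong constant; what makes it work is that $\mathcal I(S)$ is in fact of order $x+y$ except on a short $S$-interval adjacent to $x$, which is precisely why a case distinction on whether $x$ remains bounded is required.
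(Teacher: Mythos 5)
Your argument is correct, but it takes a genuinely different route from the paper's. The paper does not prove these bounds from scratch: it observes that the SIR model \eqref{a0} has a single wave, so the two critical times $u_*$ and $u^*$ coincide, and then simply quotes the three estimates for the critical time $u(x,y)$ with general threshold $\mu$ from Hynd--Ikpe--Pendleton \cite{HIP}, substituting $\mu=0.5$. You instead give a self-contained derivation: parts $(i)$ and $(ii)$ follow from the two conservation identities $\gamma\int_0^t I = x+y-(S+I)(t)$ and $(\ln I)'=\beta S-\gamma$ together with $S(t_S)\le\gamma/\beta$, and for the hard half of $(iii)$ you introduce the burn-in time $t_\varepsilon$ at which $S$ first drops below $\varepsilon\gamma/\beta$, control it via the phase-plane formula $t_\varepsilon=\int_{\varepsilon\gamma/\beta}^{x}\frac{\mathrm{d}S}{\beta S\,\mathcal I(S)}$ with $\mathcal I(S)=x+y-S-\frac{\gamma}{\beta}\ln(x/S)$, and show $t_\varepsilon\to0$ by a case split on whether $x$ stays bounded. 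This is essentially a reconstruction of the analysis in \cite{HIP} specialized to $\mu=0.5$; what it buys is independence from the cited reference (the paper's proof is only as strong as the quoted results), at the cost of length. The one point you should make explicit is that for $x+y$ large the orbit really does reach $S=\varepsilon\gamma/\beta$, i.e.\ $S_\infty<\varepsilon\gamma/\beta$; this is implicit in your positive lower bounds for $\mathcal I$ on $[\varepsilon\gamma/\beta,x]$, but deserves a sentence. Aside from that, the steps all check out.
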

\begin{proof}
Recalling that the $SIR$ model \eqref{a0} admits a single wave, we have $u_*(x,y)=u^*(x,y):=u(x,y)$.
It is shown that in \cite{HIP}
$$\frac 1\gamma \ln (\frac{x+y}{\gamma/\beta+\mu})\leq u(x,y)\leq \frac{x+y}{\mu \gamma},$$
$$ u(x,y)\leq \frac{\ln(y/\mu)}{\gamma-\beta x }\, \textrm{ if}\,  x\in [0, \gamma/\beta),$$
$$\lim\limits_{x+y\to \infty} \frac{u(x,y)}{\ln((x+y)/\mu)/\gamma}=1.$$
Replacing $\mu$ by $0.5$, we have $t_S=u(x,y)$ and complete the proof.
\end{proof}

Let us consider the example as shown in Fig. \ref{tu2}. Since $\beta=0.0005, \sigma=0.002, \gamma=0.01$, $x=S_{0}=100, y=I_{0}=20$. we have
 $100\ln \frac {240}{41}\leq t_S\leq 24000$, which is not satisfactory. However, it is our first attempt to define and estimate the stopping time, and we look forward to future progress.

{\bf Acknowledgement:}\

We are grateful to the editorial board for their review of our work.

{\bf Declaration of Competing Interest:}\ Author(s) have no conflict of interest.

{\bf Contributions of Authors:}\ All the authors Phyu Phyu Win, Zhigui Lin and Mengyun Zhang contribute equally.

{\bf Data Availability:}\ The author confirms that the data supporting the findings of this study are
available within the manuscript.

\end{document}